\documentclass[journal,onecolumn]{IEEEtran}
\usepackage{cite,amsmath,amssymb,graphicx}
\usepackage{amsthm}
\usepackage{tikz}
\usetikzlibrary{trees}
\usepackage{quantikz}

\theoremstyle{plain}

\theoremstyle{definition}

\theoremstyle{plain}
\newtheorem{theorem}{Theorem}
\title{Quantum Coordination without Conditioning under Restricted Information}

\author{Faisal~Shah~Khan
\thanks{F. S. Khan is with the Kenan Institute for Private Enterprise, UNC Chapel Hill, Chapel Hill, NC, USA (email: Faisal\_Khan@kenan-flagler.unc.edu).}%
}

\begin{document}

\maketitle
\begin{abstract}
We study which joint probability distributions can be implemented by distributed systems when agents have restricted information. Classical local operations cannot implement all distributions that are achievable with full conditioning on history. We show that shared quantum states can overcome this limitation even when using only separable states. In particular, separable states that are diagonal in a fixed basis allow certain joint distributions to be implemented through local measurements without requiring agents to condition on an inaccessible latent variable. These distributions cannot be implemented by classical local rules under the same information constraints. However, quantum implementations cannot reproduce fully adaptive dependence on past outcomes when those outcomes are observationally indistinguishable. Quantum state preparation therefore provides a partial operational substitute for perfect recall under restricted information.
\end{abstract}

\section{Introduction}

Many distributed information-processing tasks require multiple agents or subsystems to produce coordinated outcomes while operating with only limited information about previous events. A central challenge in such settings is determining which joint distributions can be physically realized through local operations when agents cannot fully observe or condition on past events.

There, Kuhn’s theorem~\cite{Kuhn1953} establishes that, under perfect recall, any mixed strategy can be equivalently implemented as a behavioral strategy: agents can achieve any desired outcome distribution by conditioning their local probabilistic actions only on the information available at each decision point. The consequences of imperfect recall and the resulting breakdown of the equivalence between mixed and behavioral strategies have been studied extensively in game theory and decision theory~\cite{PiccioneRubinstein1997,Wichardt2008}. Related questions have also been investigated in quantum game settings with imperfect recall~\cite{Frackiewicz2011}. When recall is imperfect—that is, when the available information does not fully reveal the relevant history—this equivalence breaks down. Certain distributions that admit a latent-variable representation can no longer be implemented by any collection of local conditional rules, because the latent variable (or past outcomes) lies outside the agents’ information. The obstruction is therefore operational rather than purely mathematical: the distribution exists, but it cannot be realized through local actions under the given information constraints. This perspective is closely related to the role of correlation devices in game theory, where correlated behavior may be generated through an external randomization mechanism that is not directly available to individual decision makers~\cite{Aumann1974}.

The distributions of interest admit a latent-variable representation and can therefore be generated by classical shared randomness. However, under restricted information this randomness cannot be exploited by any collection of local conditional rules, because the latent variable lies outside every agent’s available information at each stage. The quantum construction works by encoding this same classical shared randomness into a shared separable state that is diagonal in a fixed basis, so that the target joint distribution can be recovered through local measurements that respect the information partition, without any party needing to condition on the inaccessible history.

In this work, we show that quantum systems can overcome this classical limitation for a broad class of distributions. We consider a model in which outcomes are generated by performing local measurements on a shared quantum state, with the choice of measurement at each stage depending only on the locally available information. Within this framework, we show that even separable quantum states can enable the physical implementation of joint distributions that are unattainable by classical local operations under the same information restrictions. The key mechanism is state preparation: separable states that are diagonal in a fixed basis encode latent coordination variables in a distributed manner, allowing target distributions to be recovered through local measurements without requiring any party to condition on inaccessible history. Thus, quantum state preparation can compensate for the breakdown of Kuhn's equivalence under restricted information without requiring entanglement or basis-dependent measurement asymmetries. More broadly, the result contributes to a growing body of work showing that useful quantum-information-processing tasks need not rely on entanglement~\cite{Bennett1999}.

At the same time, quantum implementations remain fundamentally constrained by the information structure. Because the shared state is fixed in advance, quantum systems cannot reproduce fully adaptive, history-dependent behavior when past outcomes are observationally indistinguishable under the available information. Thus, while quantum implementations extend the set of physically realizable distributions, they do not restore the full power of perfect recall.

Beyond game-theoretic settings, restricted-information coordination problems also arise in distributed quantum technologies, including fault-tolerant quantum computing, where syndrome processors may not possess the complete syndrome history required for globally coordinated decoding decisions. Potential applications to distributed quantum error correction are discussed in the conclusion.

The remainder of the paper is organized as follows. In Section II we formalize the classical setting and the limitation arising from restricted information. Section III introduces the quantum model. Section IV develops the general implementability result and its limitations. We conclude in Section V with a discussion of applications and open questions.

\section{Classical Correlation Models under Information Constraints}\label{ClasFram}

We formalize the classical setting using a probabilistic description of correlation generation under restricted information. Consider a sequential process in which outcomes \(X_1, \dots, X_n\) are generated at each stage \(k\) based only on the available information \(Y_k = f_k(X_{<k})\), where \(X_{<k} = (X_1, \dots, X_{k-1})\). When \(Y_k = X_{<k}\), agents have perfect recall of the full history; otherwise, information is restricted.

A local classical model consists of conditional distributions \(P_k(x_k \mid y_k)\) for each stage \(k\) and each possible information state \(y_k\). The resulting joint distribution factorizes as
\begin{equation}
P(x_1, \dots, x_n) = \prod_{k=1}^n P_k(x_k \mid y_k).
\label{eq:classical-factor}
\end{equation}
Thus, all correlations are generated exclusively through conditioning on the available information variables \(Y_k\).

Global randomization can be modeled by a latent variable \(S\) with distribution \(p(s)\), where each realization \(s\) determines a complete sequence \((x_1(s), \dots, x_n(s))\). The induced distribution is
\begin{equation}
P^\star(x_1, \dots, x_n) = \sum_{s:\, (x_1(s),\dots,x_n(s))=(x_1,\dots,x_n)} p(s).
\label{eq:global}
\end{equation}
Importantly, \(S\) is inaccessible at every stage and therefore cannot be used for local conditioning.

\subsection{Kuhn’s Theorem and Conditioning on History}

Kuhn’s theorem~\cite{Kuhn1953} (see also~\cite{OsborneRubinstein1994}) connects this probabilistic formulation to extensive-form games. Under perfect recall (\(Y_k = X_{<k}\)), any mixed strategy can be implemented by a behavioral strategy: at each information set \(I\) (corresponding to a value of \(Y_k\)), the behavioral strategy \(\sigma(a|I)\) is obtained by conditioning the distribution over complete histories on those that reach \(I\):
\begin{equation}
\sigma(a|I) = \frac{\sum_{h \in H(I,a)} p(h)}{\sum_{h \in H(I)} p(h)},
\end{equation}
where \(H(I)\) is the set of histories reaching \(I\) and \(H(I,a)\) is the subset in which action \(a\) is chosen.

Equivalently, in probabilistic terms, any joint distribution admits the factorization
\begin{equation}
P(x_1, \dots, x_n) = \prod_{k=1}^n P(x_k \mid x_{<k}).
\label{eq:perfect-recall}
\end{equation}
When information is restricted, however, the variables \(Y_k\) do not fully specify the past. In this case, there exist distributions \(P^\star\) generated by global randomization (Eq.~\eqref{eq:global}) that cannot be reproduced by any collection of local conditional rules of the form~\eqref{eq:classical-factor}. This failure of local implementability is the classical limitation we revisit in the quantum setting.

\subsection{Illustrative Example}\label{illex1}

Consider a simple binary case with outcomes \(X_1, X_2 \in \{0,1\}\) and restricted information in which \(Y_2\) is constant (the second stage has no access to \(X_1\)). The information structure can be visualized as the tree in Figure~\ref{fig:binary_tree}, where the dashed line indicates that the two second-stage nodes are indistinguishable.

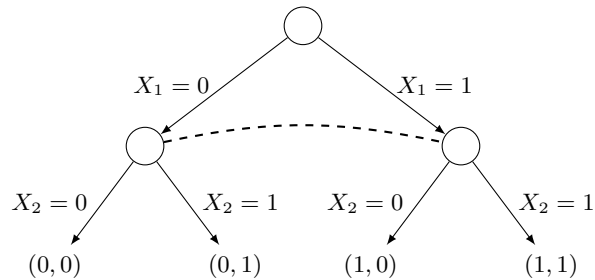
\begin{figure}[h]
\centering
\begin{tikzpicture}[
  level 1/.style={sibling distance=42mm},
  level 2/.style={sibling distance=24mm},
  level distance=16mm,
  decision/.style={circle, draw, inner sep=1.5pt, minimum size=5mm},
  edge from parent/.style={draw,-latex},
  every node/.style={font=\small}
]
\node[decision] (root) {}
  child {
    node[decision] (n1) {}
    child {
      node {$ (0,0) $}
      edge from parent node[left=2pt] {$X_2=0$}
    }
    child {
      node {$ (0,1) $}
      edge from parent node[right=2pt] {$X_2=1$}
    }
    edge from parent node[left=2pt] {$X_1=0$}
  }
  child {
    node[decision] (n2) {}
    child {
      node {$ (1,0) $}
      edge from parent node[left=2pt] {$X_2=0$}
    }
    child {
      node {$ (1,1) $}
      edge from parent node[right=2pt] {$X_2=1$}
    }
    edge from parent node[right=2pt] {$X_1=1$}
  };
\draw[dashed, thick] (n1) to[bend left=12] (n2);
\end{tikzpicture}
\caption{Tree representation of the binary illustrative example. The dashed line indicates that the two second-step nodes are indistinguishable under the available information, so the local rule generating \(X_2\) cannot depend on the realized value of \(X_1\).}
\label{fig:binary_tree}
\end{figure}

Let the target distribution be
\[
P^\star(0,1) = P^\star(1,0) = \frac12, \qquad P^\star(0,0) = P^\star(1,1) = 0.
\]
This distribution arises from global randomization with a uniform latent variable \(S \in \{0,1\}\):
\[
(X_1, X_2) = 
\begin{cases}
(0,1) & \text{if } S=0, \\
(1,0) & \text{if } S=1.
\end{cases}
\]

Under the given information constraint, any classical local model must factorize as \(P(x_1,x_2) = P_1(x_1) P_2(x_2)\). Setting \(P_1(0) = a\) and \(P_2(0) = b\) yields
\[
P(0,1) = a(1-b), \qquad P(1,0) = (1-a)b.
\]
Imposing \(P(0,1) = P(1,0) = \frac12\) forces \(a,b,1-a,1-b > 0\), which in turn implies \(P(0,0) = ab > 0\) and \(P(1,1) = (1-a)(1-b) > 0\), contradicting the target \(P^\star\). Therefore, \(P^\star\) cannot be implemented by any classical local rules under restricted information.

This example illustrates the core classical limitation: even when a distribution admits a latent-variable representation, it cannot be realized by local conditional rules if the latent variable is inaccessible through the agents’ information.

\section{Quantum Extension of the Classical Model}\label{QuanExt}

We now extend the classical framework of Section~\ref{ClasFram} to the quantum setting. The sequential structure and information constraints remain unchanged: at each stage \(k\), the outcome \(X_k\) is generated based only on the available information \(Y_k\).

The key difference is that correlations are no longer generated through local conditioning on past outcomes. Instead, they are encoded directly in a shared multipartite quantum state \(\rho\). At stage \(k\), the outcome \(X_k\) is obtained by performing a measurement on the \(k\)-th subsystem, where the choice of measurement (specified by a collection of operators \(\{M^{(k)}_{x_k}(y_k)\}_{x_k}\)) depends only on the local information \(y_k\). The resulting joint distribution is
\begin{equation}
P(x_1,\dots,x_n)
=
\operatorname{Tr}\!\Bigl[
\bigl(M^{(1)}_{x_1}(y_1) \otimes \cdots \otimes M^{(n)}_{x_n}(y_n)\bigr)\rho
\Bigr].
\label{eq:quantum-model}
\end{equation}
Correlations between outcomes are embedded in a shared state and revealed operationally through local measurements, without requiring agents to condition on past history.

Importantly, this model is designed to preserve the informational constraints of the classical setting: at each stage \(k\), the choice of measurement depends only on the available information \(Y_k\), and no signaling or adaptive quantum memory is introduced across stages. In this sense, Eq.~\eqref{eq:quantum-model} provides a natural quantum extension of the classical local model, differing only in the mechanism by which correlations are generated.

The central question is whether this quantum resource enlarges the set of implementable distributions under the same information constraints---in particular, whether distributions that cannot be written as \(\prod_k P_k(x_k \mid y_k)\) can nevertheless be realized by local quantum measurements.

\section{Quantum Implementability under Restricted Information}

We now formalize how separable quantum states can implement joint distributions that are unattainable by classical local rules under restricted information.

In the classical setting (Section~\ref{ClasFram}), correlations arise solely through conditioning on the available information variables \(Y_k\). Under perfect recall this allows any joint distribution to be implemented locally; under imperfect recall, many distributions generated by global randomization (including those with latent-variable representations) cannot be reproduced by rules of the form \(\prod_k P_k(x_k \mid y_k)\). The quantum model of Section~\ref{QuanExt} provides an alternative: correlations are encoded directly in a shared state \(\rho\) and revealed through local measurements (Eq.~\eqref{eq:quantum-model}), without requiring conditioning on past outcomes.

Related examples of coordination under restricted information have previously appeared in the context of imperfect-recall quantum games~\cite{Khan2025}. From a game-theoretic perspective, Theorem~\ref{thm1} may be viewed as a quantum realization of a correlation device whose latent variable remains inaccessible to the agents' information structure. The present result identifies a more general mechanism underlying such examples. Rather than relying on entanglement or other nonclassical correlation resources, implementability follows from the ability of a shared separable state to encode latent coordination variables that are inaccessible through the agents' information structure.

\subsection{Quantum Implementation of Hidden Coordination}
We first identify a broad class of distributions that admit a classical latent-variable representation yet cannot be implemented by classical local rules under restricted information. The obstruction is not the absence of a latent variable, but the inability to condition on it when it lies outside the information structure \(Y_k\).

The mechanism uses separable states that are diagonal in a fixed basis.

\begin{theorem}\label{thm1}
Consider a joint distribution of the form
\[
P^\star(x_1,\dots,x_n)
=
\sum_s p(s)\prod_{k=1}^n P_k(x_k \mid s, y_k),
\]
where \(s\) is a latent variable that determines the correlations but is inaccessible through the information variables \(Y_k\). Then \(P^\star\) can be implemented by local quantum measurements on a shared separable state, without making \(s\) accessible through \(Y_k\).
\end{theorem}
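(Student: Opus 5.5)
The construction encodes the latent variable \(s\) into a classically correlated state. On each subsystem \(k\) I take a Hilbert space \(\mathcal H_k\) with an orthonormal basis \(\{|s\rangle_k\}\) indexed by the values of the latent variable. The shared state is
\[
\rho=\sum_s p(s)\,|s\rangle\langle s|_1\otimes\cdots\otimes|s\rangle\langle s|_n .
\]
This is a convex combination of product states, so it is separable, and it is diagonal in the fixed product basis. It also plays the role of the correlation device mentioned before the theorem: the value \(s\) is copied into every subsystem, but it never enters any \(Y_k\).

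Next, for each stage \(k\) and each information value \(y_k\), I define the local measurement
\[
M^{(k)}_{x_k}(y_k)=\sum_s P_k(x_k\mid s,y_k)\,|s\rangle\langle s|_k .
\]
Each operator is positive, since its coefficients are probabilities. The operators sum to the identity over \(x_k\), because \(\sum_{x_k}P_k(x_k\mid s,y_k)=1\) for every \(s\). So this is a valid POVM for every \(y_k\). The choice of POVM depends only on \(y_k\), which respects the information constraint of Section~\ref{QuanExt}. The dependence on \(s\) is carried entirely by the operator acting on the subsystem, not by any conditioning the agent performs.

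Substituting into Eq.~\eqref{eq:quantum-model}, the trace factorizes over the diagonal decomposition, since \(\langle s|M^{(k)}_{x_k}(y_k)|s\rangle=P_k(x_k\mid s,y_k)\). This gives
\[
\sum_s p(s)\prod_k P_k(x_k\mid s,y_k)=P^\star(x_1,\dots,x_n).
\]
The fact that \(s\) stays inaccessible then follows from the construction: no \(Y_k\) has been modified, and agents only ever hold \(y_k\) and their measurement outcomes.

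I expect the main obstacle to be a consistency issue rather than the algebra. The information \(y_k=f_k(x_{<k})\) is itself a function of earlier outcomes, so the measurement at stage \(k\) is chosen sequentially. I must check that the product formula is the correct sequential probability. My plan is to argue stage by stage. Conditioned on the diagonal branch \(s\), the measurements act on distinct, uncorrelated subsystems of the product state \(|s\rangle\langle s|^{\otimes n}\). Because these POVMs are diagonal, they do not disturb the remaining subsystems. Hence the probability of the outcome sequence in branch \(s\) is \(\prod_k P_k(x_k\mid s,y_k)\), with each \(y_k\) evaluated on the realized prefix, and averaging over \(p(s)\) gives \(P^\star\). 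A finite (or countable) range for \(s\) is needed for the Hilbert spaces; if \(s\) is continuous, I would first discretize or replace the sum by an integral over a separable state on \(L^2\).
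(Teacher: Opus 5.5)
Your proposal is correct and matches the paper's proof. It uses the same classically correlated state $\rho=\sum_s p(s)\,|s\rangle\langle s|^{\otimes n}$, the same diagonal POVMs $M^{(k)}_{x_k}(y_k)=\sum_s P_k(x_k\mid s,y_k)\,|s\rangle\langle s|_k$, and the same trace evaluation. Your stage-by-stage check that the product formula is the right sequential probability when $y_k=f_k(x_{<k})$ is a sound refinement that the paper leaves implicit, though the unmeasured subsystems are left undisturbed because the measurements are local and each branch state is a product, not because the POVMs are diagonal.
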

\begin{proof}
Let \(\{\ket{s}\}\) be an orthonormal basis indexed by \(s\). Define the separable state
\[
\rho = \sum_s p(s)\bigotimes_{k=1}^n \ket{s}\bra{s}^{(k)}.
\]
Each subsystem carries the classical label \(s\) in a distributed, locally inaccessible form. For each \(k\), define a POVM \(\{M^{(k)}_{x_k}(y_k)\}_{x_k}\) that is diagonal in the \(\{\ket{s}\}\) basis and satisfies
\[
\bra{s} M^{(k)}_{x_k}(y_k) \ket{s} = P_k(x_k \mid s, y_k).
\]
(For concreteness, one may take \(M^{(k)}_{x_k}(y_k) = \sum_s P_k(x_k \mid s, y_k) \ket{s}\bra{s}\).) These operators are positive and sum to the identity, hence form a valid POVM.

The resulting joint distribution is
\[
P(x_1,\dots,x_n)
=
\operatorname{Tr}\!\Bigl[
\bigl(M^{(1)}_{x_1}(y_1) \otimes \cdots \otimes M^{(n)}_{x_n}(y_n)\bigr)\rho
\Bigr]
= \sum_s p(s)\prod_{k=1}^n \bra{s} M^{(k)}_{x_k}(y_k) \ket{s}.
\]
By construction this equals \(P^\star(x_1,\dots,x_n)\). Thus \(P^\star\) is implemented by local quantum measurements on a shared separable state without granting access to \(s\).
\end{proof}

Theorem~\ref{thm1} shows that, under restricted information, state preparation can replace conditioning on an inaccessible latent variable. As a concrete illustration, apply Theorem~\ref{thm1} to the same target distribution from Section~\ref{illex1} using the classically correlated separable state
\[
\rho_{AB} = \frac12 |0\rangle\langle0|_A \otimes |0\rangle\langle0|_B + \frac12 |1\rangle\langle1|_A \otimes |1\rangle\langle1|_B
\]
together with the following local measurements (respecting the information constraint that \(Y_2\) is constant) implements \(P^\star\):
\begin{itemize}
    \item Stage 1 (Alice): measure subsystem \(A\) in the computational basis to obtain \(X_1\).
    \item Stage 2 (Bob): measure subsystem \(B\) in the computational basis and output the flipped result (\(X_2 = 1 -\) outcome).
\end{itemize}
This construction uses only classically correlated (diagonal) states and requires \textit{no quantum discord}. It exactly reproduces the target anti-correlated distribution even though Bob has no access to Alice’s outcome or to the latent variable \(S\).

Notably, this quantum advantage does not require quantum discord. As shown by Wei and Zhang~\cite{WeiZhang2017}, even purely classical correlations can yield an operational advantage when properly encoded in a shared quantum state and read out via the Born rule. This reinforces that the mechanism in Theorem~\ref{thm1} is quite general: classical shared randomness, when distributed through a quantum state, suffices to overcome the limitations imposed by restricted information.

\subsection{Limitations of Quantum Implementations}

Quantum implementations remain limited because the shared state \(\rho\) is fixed in advance. At each stage \(k\), the choice of measurement can depend only on the locally available information \(Y_k\); agents cannot adapt their measurements based on distinctions in history that are hidden behind \(Y_k\).

They can pre-encode useful correlations through state preparation, but they cannot replicate the full power of dynamic conditioning that perfect recall would allow.

Theorem~\ref{thm1} therefore shows that quantum state preparation acts as a partial substitute for perfect recall: it extends the set of implementable distributions beyond classical local rules, yet leaves a fundamental gap.

\section{Conclusion}

We have studied the implementability of joint distributions under restricted information. In the classical setting, Kuhn's theorem establishes that global randomization can be implemented through local conditional rules when agents have perfect recall. When recall is imperfect, however, this equivalence breaks down: certain distributions cannot be realized by any collection of local rules, even when they admit a latent-variable representation.

This work shows that quantum systems can overcome this classical limitation. Theorem~\ref{thm1} establishes that even separable quantum states can implement joint distributions that are unattainable by classical local rules under the same information constraints. The essential mechanism is state preparation: latent coordination variables can be encoded in a shared state and recovered through local measurements without requiring agents to condition on inaccessible history.

At the same time, quantum implementations remain limited by the fact that the shared state is fixed in advance. They therefore provide only a partial substitute for perfect recall: they enlarge the set of implementable distributions under restricted information, but they cannot restore fully adaptive conditioning on realized history.

These results highlight a clear distinction between classical and quantum mechanisms for coordination under restricted information. Characterizing the precise boundary between classical and quantum implementability under restricted information remains an important open problem. In particular, two questions arise from Theorem~\ref{thm1}:
\begin{itemize}
   \item All explicit constructions in this paper yield distributions that admit a latent-variable representation of the form in Theorem~\ref{thm1}. Is every distribution implementable by separable quantum states under restricted information representable in this way, or do separable quantum states permit a strictly larger class of implementable distributions?
    \item More generally, what is the full set of distributions that can be implemented quantum-mechanically under restricted information?
\end{itemize}

While the questions above are primarily theoretical, the framework also has potential relevance to practical quantum technologies. As an illustration, we now discuss an application to distributed syndrome decoding in fault-tolerant quantum computing.

\subsection{Application to Distributed Syndrome Decoding}

In large-scale fault-tolerant quantum computing architectures, syndrome extraction and decoding are frequently distributed across multiple hardware modules or spatially separated regions of a quantum processor. In such settings, no individual syndrome processor possesses the complete global syndrome history. Instead, each processor operates under an information constraint in which its locally available data \(Y_k\) constitutes only a restricted view of the syndrome. This situation is formally analogous to the restricted-information model studied in the preceding sections: the globally relevant error hypothesis is not fully observable at every local decision point.

To illustrate how Theorem~\ref{thm1} can be applied in this context, consider a minimal toy model consisting of three syndrome processors, labeled A, B, and C. Let \(S \in \{0,1\}\) denote a globally consistent binary hypothesis about the underlying error (for example, which of two possible global error patterns is compatible with the observed syndrome). A coordinated recovery strategy requires the processors to output correction decisions satisfying
\begin{equation}
(X_A, X_B, X_C) = (S, 1-S, S)
\end{equation}
with probability one. That is, processors A and C apply corrections consistent with hypothesis \(S\), while processor B applies the complementary action.

Under the given information constraints, the local view \(Y_k\) available to each processor does not uniquely determine the value of \(S\). Consequently, any classical local decoding rule must take the product form
\[
P(x_A, x_B, x_C) = \prod_{k \in \{A,B,C\}} P_k(x_k \mid y_k).
\]
As shown by the binary example in Section~II-B, no choice of such local conditional distributions can realize the target joint distribution over correction outcomes. The processors cannot condition their actions on the inaccessible global hypothesis \(S\), and any attempt to match the marginals on the coordinated triples necessarily produces positive probability on the undesired outcomes \((0,0,0)\) and \((1,1,1)\).

By contrast, the separable-state construction of Theorem~\ref{thm1} yields an immediate implementation that respects the same information constraints. Prepare the three-partite separable state
\begin{equation}
\rho = \frac12 |0\rangle\langle0|_A \otimes |0\rangle\langle0|_B \otimes |0\rangle\langle0|_C 
     + \frac12 |1\rangle\langle1|_A \otimes |1\rangle\langle1|_B \otimes |1\rangle\langle1|_C.
\end{equation}
Each processor \(k\) performs a local POVM \(\{M^{(k)}_{x_k}(y_k)\}_{x_k}\) whose elements are diagonal in the computational basis and satisfy
\[
\langle s | M^{(k)}_{x_k}(y_k) | s \rangle = P_k(x_k \mid s, y_k),
\]
where the conditional probabilities \(P_k(x_k \mid s, y_k)\) are chosen so that the output bit \(x_k\) equals \(S\) for processors A and C and equals \(1-S\) for processor B. Because the choice of measurement depends only on the locally available information \(y_k\), no processor is required to condition on the value of \(S\) or on any part of the syndrome history that lies outside its information set. The joint distribution over the correction decisions \((X_A, X_B, X_C)\) is then exactly the desired coordinated distribution.

This construction shows that a pre-shared separable quantum state can encode a global syndrome hypothesis in distributed form, enabling locally operating processors to coordinate on a consistent recovery action without any processor needing direct access to the full syndrome history. While the model above is deliberately minimal, it demonstrates a coordination primitive that may be relevant to modular or distributed decoding architectures in which communication latency, bandwidth, or hardware constraints prevent the exchange of complete syndrome information among decoding units.

Two limitations should be noted. First, because the state \(\rho\) is prepared in advance, the construction can coordinate only on hypotheses whose distribution is fixed at preparation time; it cannot implement fully adaptive decoding that would require processors to distinguish between syndrome histories that remain observationally indistinguishable under their local information \(Y_k\). Second, realizing a concrete performance advantage in a realistic setting would require embedding the mechanism into a specific quantum error-correcting code and comparing the resulting logical error rate against that achievable by classical local decoding under identical information constraints. These questions lie beyond the scope of the present work.

\bibliographystyle{IEEEtran}
\bibliography{references}

\end{document}